	\newcommand{\ra}[1]{\renewcommand{\arraystretch}{#1}}
	\tikzstyle{accepting}=[double distance=1.5pt, outer sep=1pt+\pgflinewidth]
\newcommand{\Ptime}{\text{P}\xspace}
\newcommand{\NP}{\text{NP}\xspace}
\newcommand{\coNP}{\text{coNP}\xspace}
\newcommand{\PH}{\text{PH}\xspace}
\newcommand{\SigmaP}{\Sigma^P\xspace}
\newcommand{\PiP}{\Pi^P\xspace}
\DeclareMathOperator{\A}{\mathcal A}
\DeclareMathSymbol{\shortminus}{\mathbin}{AMSa}{"39}
\newtheorem{hypothesis}[theorem]{Hypothesis}
\begin{document}
\title{On the Secret Protection Problem in~Discrete-Event Systems}
\author{Tom{\' a}{\v s}~Masopust \and Jakub Ve{\v c}e{\v r}a}
\authorrunning{T.~Masopust, J.~Ve{\v c}e{\v r}a}
\institute{Faculty of Science, Palacky University Olomouc, Czechia\\
\email{tomas.masopust{@}upol.cz}, \email{jakub.vecera01@upol.cz}\\
\url{https://apollo.inf.upol.cz:81/vecerajakub/SPP-automata}
}
\maketitle
\begin{abstract}
	The secret protection problem (SPP) seeks to synthesize a minimum-cost policy ensuring that every execution from an initial state to a secret state includes a sufficient number of protected events. The problem is solvable in polynomial time under the assumption that transitions are uniquely labeled. When this assumption is relaxed, the problem becomes weakly \NP-hard. We first strengthen the result by showing that the problem is strongly \NP-hard even if all parameters are restricted to binary values. We then propose a formulation of SPP as an integer linear programming (ILP) problem, and empirically evaluate the scalability and effectiveness of the ILP-based approach on relatively large systems. Finally, we examine the complexity of a variant of SPP in which only distinct protected events contribute to clearance and show that its decision version is $\Sigma_{2}^{P}$-complete.
\keywords{Discrete event systems \and Secret protection problem \and Finite automata \and Complexity \and Integer linear programming.}
\end{abstract}

\section{Introduction}
	Modern systems frequently store and process sensitive information, from personal data to financial credentials, which must remain confidential to unauthorized users. This concern led to the formulation of the \emph{Secret Protection Problem} (SPP), a framework for guaranteeing that users complete a predefined number of security checks before gaining access to sensitive parts of a system, see \cite{MatsuiC19,MaC22,MaC25} for further details, including discussions on the motivation and applications.
	
	In the SPP framework, a system is modeled as a finite automaton with a designated set of \emph{secret states}. Each secret state is associated with a \emph{security level}, defined as a non-negative integer quantifying the minimum number of security checks required to reach that state. The event set is partitioned into \emph{protectable} and \emph{unprotectable} events. Each protectable event is assigned a non-negative \emph{cost}, representing the implementation effort of the corresponding security check, and a \emph{clearance level}, indicating the number of clearance units granted upon its execution. An event is \emph{protected} if a security check has been placed on it.
	The objective of SPP is to synthesize a \emph{protecting policy\/} of minimal cost such that every execution path from an initial state to a secret state includes a number of protected events that meets or exceeds the security level assigned to the respective secret state.
	This problem is referred to as Parikh-SPP in this paper.
	
	If all transitions are labeled with unique events, Ma and Cai~\cite{MaC22} showed how to solve Parikh-SPP via a reduction to a supervisory control problem or to a max-flow network problem, providing thus a polynomial-time algorithm. If the assumption is relaxed, Ma et al.~\cite{MaJC24} proved that Parikh-SPP becomes weakly \NP-hard. 
	However, they left two questions open: \emph{Is the problem strongly \NP-hard?} and \emph{What is the complexity of a special case of Parikh-SPP, called the uniform Parikh-SPP (Parikh-SPP-U) in which all events have a uniform clearance level of one?}

	We answer both problems by proving that Parikh-SPP\nobreakdash-U is strongly NP\nobreakdash-hard, and that it remains \NP-hard even if the ranges of the cost function, the clearance function, and the security requirement function are restricted to be binary. Furthermore, we show that no sub-exponential-time algorithm exists for Parikh-SPP unless the Exponential Time Hypothesis~\cite{ImpagliazzoP01} fails.
	
	These intractability results raise a fundamental question: \emph{How can Parikh-SPP be practically solved for real-world systems?} Although Ma et al.~\cite{MaJC24} proposed a polynomial-time heuristic algorithm, its practical applicability remains limited---instances with only 100 states can require up to five minutes to solve, and the algorithm offers no guarantees on the quality of the solution. 
	This gap highlights the need for a robust and efficient method to solve Parikh-SPP. We address this challenge by formulating Parikh-SPP as an integer linear programming (ILP) problem and demonstrating its scalability and effectiveness in finding exact solutions for relatively large systems across an extensive set of benchmarks, including both real-world and randomly generated data.
	
	Finally, we analyze the complexity of a variant of SPP in which only distinct protected events contribute to the clearance, referred to as Indicator-SPP. This variant is motivated by real-world multi-factor authentication schemes. We prove that its decision version is $\Sigma_{2}^{P}$-complete, placing it at the second level of the polynomial-time hierarchy. In contrast to Parikh-SPP, the existence of a practically efficient algorithm for Indicator-SPP remains an open problem.

\section{Preliminaries}
	We assume that the reader is familiar with automata theory~\cite{sipser1996introduction}.
	The sets of non-negative integers and non-negative real numbers are denoted by $\mathbb{N}$ and $\mathbb{R}_{\ge 0}$, respectively. For a set $S$, the cardinality of $S$ is denoted by $|S|$, and its power set by $2^S$. If $S=\{x\}$ is a singleton, we often write $x$ instead of $\{x\}$.
	An alphabet $\Sigma$ is a finite nonempty set of \emph{events}. A string over $\Sigma$ is a finite sequence of events from $\Sigma$. The set of all strings over $\Sigma$ is denoted by $\Sigma^*$, and the empty string is denoted by $\varepsilon$. A language $L$ over $\Sigma$ is a subset of $\Sigma^*$. For a string $u\in \Sigma^*$ and an event $a\in \Sigma$, the number of occurrences of $a$ in $u$ is denoted by~$|u|_a$.

	A {\em nondeterministic finite automaton\/} (NFA) is a quintuple $\A = (Q,\Sigma,\delta,I,F)$, where $Q$ is a finite set of states, $\Sigma$ is an input alphabet, $I\subseteq Q$ is a set of initial states, $F \subseteq Q$ is a set of accepting states, and $\delta \colon Q\times\Sigma \to 2^Q$ is a transition function that can be inductively extended to $\delta\colon 2^Q\times\Sigma^* \to 2^Q$. 
	The language generated by $\A$ is the set $L(\A) = \{w\in \Sigma^* \mid \delta(I,w) \neq\emptyset\}$, and the language accepted by $\A$ is the set $L_m(\A) = \{w\in \Sigma^* \mid \delta(I,w)\cap F \neq\emptyset\}$.

	We briefly review the concepts from complexity theory~\cite{AroraBarak2009}. A \emph{decision problem\/} is a yes--no question defined over instances of a problem. The problem is \emph{decidable\/} if there is an algorithm that determines, for every instance, whether the answer is yes or no.
	Decidable problems are categorized into complexity classes based on the time or space required to solve them. The classes \Ptime and \NP consist of problems solvable by deterministic and nondeterministic polynomial-time algorithms, respectively. A problem is \NP-complete if it satisfies two conditions:
	(i) Membership---the problem belongs to \NP, and
	(ii) Hardness---every problem in \NP can be reduced to it in deterministic polynomial time.
	While $\Ptime \subseteq \NP$, whether the inclusion is strict is one of the most important open problems in computer science.
	
	An \NP-complete problem is \emph{weakly} \NP-complete if it is solvable in pseudo-polynomial time, i.e., in time polynomial in the numeric value of the input rather than in the length of the input (the number of bits representing it). A problem is \emph{strongly} \NP-complete if it remains \NP-hard even if the data are encoded in unary, i.e., independent of the numerical values of the data~\cite{GareyJ79}. Hereafter, \NP-complete means strongly \NP-complete unless stated otherwise.

	The levels of the \emph{poly\-no\-mi\-al-time hierarchy} (\PH) are defined as follows: $\SigmaP_0 = \PiP_0 = \Ptime$, and for $k \geq 0$, the class $\SigmaP_{k+1}$ contains problems solvable by a nondeterministic polynomial-time Turing machine with access to an oracle for a problem in $\SigmaP_k$. Analogously, $\PiP_{k+1}$ is  the class of problems whose complements are in $\SigmaP_{k+1}$. In particular, $\SigmaP_1=\NP$ and $\PiP_1 = \coNP$, that is, $\coNP$ is the class of problems whose complements are in $\NP$. It is widely believed that the hierarchy is strict, see~\cite{Stockmeyer76} for more details. 

	A {\em Boolean formula\/} consists of variables, logical connectives ($\land, \lor, \neg$), and parentheses. A {\em literal\/} is a variable or its negation. A {\em clause\/} is a disjunction of literals. A formula is in {\em conjunctive normal form\/} (CNF) if it is a conjunction of clauses. If each clause contains at most $k$ literals, the formula is in $k$-CNF.
 	Analogously, a \emph{conjunct} is a conjunction of literals, and a formula is in \emph{disjunctive normal form} (DNF) if it is a disjunction of conjuncts. If each conjunct contains at most $k$ literals, the formula is in $k$-DNF.
	A formula is {\em satisfiable\/} if there is an assignment of 1 and 0 to the variables such that the formula evaluates to 1.
	For $k\ge 1$, the $k$-CNF {\em Boolean satisfiability problem\/} ($k$-SAT) asks whether a given formula in $k$-CNF is satisfiable.
	If a formula in $k$-CNF has $n$ variables, enumerating possible truth assignments results in an $O(2^n n^k)$-time algorithm for $k$-SAT.
	\emph{Quantified Boolean formulas} (QBFs) extend Boolean formulas by allowing quantification over variables.
	
	3-SAT is \NP-complete, and despite extensive efforts, no sub-exponential-time algorithm has been found to date. The lack of progress led to the formulation of the exponential time hypothesis (ETH), which posits that 3-SAT cannot be solved in sub-exponential time $2^{o(n)}$, where $n$ is the number of variables in the 3-CNF formula; recall that $o(g(n))$ denotes the class of functions $f(n)$ that grow asymptotically much slower than $g(n)$, i.e., $\lim_{n\to \infty} f(n)/g(n) = 0$.
	\begin{hypothesis}[ETH, \cite{ImpagliazzoP01}]
		There is a $\lambda > 0$ such that 3\nobreakdash-SAT cannot be solved in time $O(2^{\lambda n})$, where $n$ is the number of variables in the formula.
	\end{hypothesis}
	Note that ETH allows for algorithms that solve 3-SAT in time $O(c^n)$ for $c<2$. In fact, the current fastest algorithm by Paturi et al.~\cite{PaturiPSZ05}, improved by Hertli~\cite{Hertli14}, runs in time $O^*(1.30704^n)$; we use the soft big-$O$ notation $O^*$ to disregard polynomial factors $n^{O(1)}$ in the big-$O$ notation; e.g., we can write the complexity $O(2^n n^k)$ as $O^*(2^n)$.

\section{The Secret Protection Problem}
	In this paper, the formulation of the problem is intentionally simplified for clarity of our generalization. For the original treatment, we refer to Ma and Cai~\cite{MaC22} or Ma et al.~\cite{MaJC24}.

	Let $\A = (Q,\Sigma,\delta,I,S)$ be an NFA. States of $S$ are called \emph{secret states}. A {\em security requirement\/} is a function \mbox{$\ell\colon Q \to \mathbb{N}$} that assigns a non-negative security level $\ell(q)$ to each secret state $q\in S$. Intuitively, the security level specifies the minimum number of \emph{protected events} that must be traversed to reach state $q$ from any initial state. We assume that initial states are not secret, i.e., $I\cap S = \emptyset$.
	
	The event set $\Sigma$ is partitioned into {\em protectable events\/} $\Sigma_p$ and {\em unprotectable events\/} $\Sigma_{up}$; we use the notation $\Sigma = \Sigma_p \uplus \Sigma_{up}$. The protectable events are characterized by two functions:
	\begin{itemize}
		\item The \emph{clearance function} $\gamma \colon \Sigma_p \to \mathbb{N}$ specifying the number of clearance units granted upon the execution of a protectable event.
		
		\item The \emph{cost function} $c \colon \Sigma_p \to \mathbb{R}_{\geq 0}$ assigning a cost to the implementation of protection for each protectable event. 
	\end{itemize}
 
	For every protectable event $a\in \Sigma_p$, let $\chi_a \colon \Sigma^* \to \mathbb{N}$ be a function that assigns a natural number to each string over $\Sigma$. We denote by $\chi = \{\chi_a \mid a \in \Sigma_p\}$ the family of these functions. In this paper, we focus on two specific instances:
	\begin{enumerate}
		\item the \emph{Parikh function} $\chi_a(w) = |w|_a$, which counts every occurrence of $a$ in $w$, and
		\item the \emph{indicator function} 
		\[
			\chi_a(w) = 
				\begin{cases}
					1 & \text{ if $a$ occurs in $w$,}\\ 
					0 & \text{ otherwise,}
				\end{cases}
		\]
		which counts each protectable event in $w$ only once.
	\end{enumerate}
 
 	A \emph{protecting policy} $\mathcal P \subseteq \Sigma_p$ is a subset of protectable events. A policy $\mathcal{P}$ is \emph{$\chi$-valid} if, for every string $w \in L(\A)$ that leads to a secret state $q\in S$,
 	\[
 		 \sum_{a\in \mathcal P} \Bigl(\gamma(a) \cdot \chi_a(w)\Bigr) \ge \ell(q)\,.
 	\]
	The \emph{cost of a policy} $\mathcal{P}$, denoted by $C(\mathcal P)$, is the sum of the costs of all events included in the policy:
	\[
		C(\mathcal{P}) = \sum_{a \in \mathcal{P}} c(a)\,.
	\] 
	A policy is $\chi$-\emph{optimal} if it is $\chi$-valid and no other $\chi$-valid policy has a lower cost.
	The \emph{secret protection problem under $\chi$-optimality} is defined as follows.	
	\begin{definition}[$\chi$-SPP]
		Given a system $\A = (Q, \Sigma, \delta, I, S)$ with $\Sigma = \Sigma_p \uplus \Sigma_{up}$, along with a security requirement $\ell$, a clearance function $\gamma$, and a cost function $c$, the objective is to determine a $\chi$-optimal protection policy $\mathcal P\subseteq \Sigma_p$.
	\end{definition}

	A special case of $\chi$-SPP, the \emph{uniform $\chi$-SPP} ($\chi$-SPP-U), simplifies the clearance mechanism by assuming that the clearance level for every protectable event is uniformly set to one, i.e., $\gamma(a) = 1$ for all $a \in \Sigma_p$. 
	
	We now formulate the decision versions, which generalize the decision version presented by Ma et al.~\cite{MaJC24}. 
	
	\begin{definition}[BC-$\chi$-SPP]
		The \emph{budget-constrained $\chi$-SPP} (BC\nobreakdash-$\chi$-SPP) asks whether, given a system $\A$ over $\Sigma = \Sigma_p \uplus \Sigma_{up}$, a security requirement $\ell$, a clearance function $\gamma$, a cost function $c$, and a budget limit $W \in \mathbb{N}$, there is a valid protecting policy $\mathcal{P} \subseteq \Sigma_p$ such that the cost of the policy satisfies $C(\mathcal{P}) \leq W$. Analogously, we define BC-$\chi$-SPP-U.
	\end{definition}

	If $\chi$ consist purely of Parikh (resp. indicator) functions, we call the problem Parikh-SPP (resp. Indicator-SPP).
	
	The concept of Indicator-SPP is intended to differentiate between distinct authentication methods, rather than to account for repeated use of the same method as used in Parikh-SPP. The following example highlights the difference between the two notions.
	
	\begin{figure}
		\centering
		\includegraphics[scale=1]{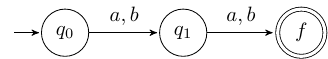}
		\caption{An instance of Parikh-SPP but not of Indicator-SPP.}
		\label{fig002}
	\end{figure}
	
	\begin{example}
		We consider the automaton of Figure~\ref{fig002}, where state $f$ is the only secret state, with $\ell(f) = 2$, and both events $a$ and $b$ are protectable, with 
		\(
			c(a) = c(b) = \gamma(a) = \gamma(b) = 1
		\).
		Let the protecting policy be $\mathcal{P} = \{a, b\}$. Then $\mathcal{P}$ is Parikh-valid and, in fact, Parikh-optimal, since there are four sequences reaching state $f$, namely $aa$, $ab$, $ba$, and $bb$, and for each sequence $w \in \{aa, ab, ba, bb\}$, we have 
		\(
			\gamma(a) \cdot |w|_a + \gamma(b) \cdot |w|_b = 2 \ge \ell(f).
		\)
		On the other hand, no indicator-valid protecting policy exists for this scenario. Indeed, as shown by the following calculation
		\(
			\gamma(a) \cdot \chi_{a}(aa) + \gamma(b) \cdot \chi_{b}(aa) = 1 < \ell(f),
		\)
		the sum of the clearance values is less than the required security threshold $\ell(f)$. Consequently, even the largest possible policy, $\mathcal P=\{a,b\}$, fails to meet the indicator-validity condition. Therefore, no protecting policy yields sufficient clearance to satisfy the security requirement under the indicator function.
		\hfill$\diamond$
	\end{example}
  
\section{Complexity of Parikh-SPP}
	We now show that Parikh-SPP(-U) is \NP-hard even under the restricted setting where both the cost and clearance functions are uniformly set to one, and the security requirement is binary, with exactly one state having a non-zero requirement. To this end, we prove \NP-hardness of the corresponding decision problem. Specifically, we show that BC-Parikh-SPP belongs to NP and that BC-Parikh-SPP-U is \NP-hard.

	\begin{lemma}\label{lemma1}
		BC-Parikh-SPP belongs to NP.
	\end{lemma}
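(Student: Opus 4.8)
The plan is to exhibit a polynomial-size certificate together with a deterministic polynomial-time verification procedure. The natural certificate for a yes-instance is the protecting policy $\mathcal{P} \subseteq \Sigma_p$ itself, which can be encoded in space $O(|\Sigma_p|)$ and is therefore polynomial in the size of the instance. Given such a guessed $\mathcal{P}$, the verifier must confirm two conditions: that the budget constraint $C(\mathcal{P}) = \sum_{a \in \mathcal{P}} c(a) \le W$ holds, and that $\mathcal{P}$ is valid. The first condition is checked trivially by summing at most $|\Sigma_p|$ cost values and comparing the result to $W$.

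The hard part is verifying validity, since the condition quantifies over all strings $w \in L(\A)$ reaching a secret state, and there may be infinitely many such strings when $\A$ contains cycles. I would circumvent this by observing that the clearance $\sum_{a \in \mathcal{P}} \gamma(a)\cdot |w|_a$ accumulated by a string $w$ depends only on the multiset of events occurring in $w$, and hence equals the total edge weight of any path labeled by $w$ once we weight each transition labeled $a$ by $\gamma(a)$ if $a \in \mathcal{P}$ and by $0$ otherwise. Under this weighting, the minimum clearance over all strings reaching a fixed secret state $q$ coincides with the minimum-weight path from an initial state to $q$ in the resulting weighted transition graph.

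Because the clearance function $\gamma$ is non-negative, all edge weights are non-negative, so this minimum is attained by a simple path and can be computed in polynomial time, e.g., by Dijkstra's algorithm run from a super-source joined to all initial states by zero-weight edges. The policy $\mathcal{P}$ is then valid precisely when, for every secret state $q \in S$ reachable from $I$, the computed distance $d(q)$ satisfies $d(q) \ge \ell(q)$; unreachable secret states impose no constraint and are satisfied vacuously.

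Assembling these steps, the verifier guesses $\mathcal{P}$, checks the budget in linear time, builds the weighted graph, runs a single shortest-path computation, and compares the distances against the security requirement at the secret states. All steps run in time polynomial in $|Q|$, $|\Sigma|$, and the bit-length of the numerical parameters, which establishes membership in NP. I expect the only conceptual subtlety to be the reduction of the universally quantified validity condition to a shortest-path problem, together with the observation that non-negativity of $\gamma$ is exactly what guarantees that cycles cannot lower the minimum clearance, and thus keeps the computation finite.
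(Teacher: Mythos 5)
Your proposal is correct and follows essentially the same route as the paper: guess the policy $\mathcal{P}$ as a polynomial-size certificate, check the budget by summation, and verify validity in polynomial time via a shortest-path (Dijkstra) computation on the transition graph weighted by $\gamma(a)$ for protected events and $0$ otherwise, using the fact that meeting the requirement on minimum-clearance paths implies meeting it on all paths. You merely spell out details the paper leaves implicit (the explicit edge weighting, the super-source for multiple initial states, and non-negativity of weights justifying Dijkstra), so no substantive difference remains.
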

	\begin{proof}
		Given an instance $\A$ over $\Sigma = \Sigma_p \uplus \Sigma_{up}$ of BC-Parikh-SPP, there are $2^{|\Sigma_p|}$ possible protecting policies. We can nondeterministically guess a correct policy $\mathcal P$ and verify, in polynomial time, that $\mathcal P$ is valid and satisfies the constraint $C(\mathcal P) \le W$. 
		To verify validity, we apply Dijkstra's algorithm to compute shortest paths from each initial state to all secret states, taking into account the clearances of the events specified by the policy~$\mathcal{P}$. If the clearance requirements of the secret states are met along the shortest path, then they are met along all paths~\cite{MaJC24}.
	\end{proof}

	Now, we prove \NP-hardness.
	\begin{lemma}\label{lemma2}
		BC-Parikh-SPP-U is \NP-hard. It remains \NP-hard even if the cost and clearance functions are constant with values one, and the security requirement satisfies $\ell(x)\in\{0,1\}$ where only a single state is secret, i.e., has a non-zero security requirement.
	\end{lemma}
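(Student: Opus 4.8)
The plan is to reduce from 3-SAT, exploiting the fact that under the stated restrictions the $\chi$-validity condition collapses to a purely combinatorial covering requirement. With $\gamma\equiv 1$ and a single secret state $s$ satisfying $\ell(s)=1$, a policy $\mathcal P$ is $\chi$-valid if and only if every string $w$ leading to $s$ contains at least one event of $\mathcal P$; equivalently, deleting all transitions labeled by events of $\mathcal P$ renders $s$ unreachable from every initial state. This equivalence is immediate, because \emph{containing} an event is a property of the string shared by all of its runs. Since every cost equals one, $C(\mathcal P)=|\mathcal P|$, so the decision question becomes: can at most $W$ protectable events cover every accepting path? I would design the automaton so that its accepting paths encode the clauses of a formula while the covering events encode a truth assignment.

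Given a 3-CNF formula $\phi$ over variables $x_1,\dots,x_n$ with clauses $C_1,\dots,C_m$, I would first introduce two protectable events $t_i,f_i$ per variable, all of cost and clearance one, under the reading that protecting $t_i$ sets $x_i$ to $1$ and protecting $f_i$ sets $x_i$ to $0$. The automaton has one initial state $q_0$, one secret state $s$ with $\ell(s)=1$, and a separate disjoint chain of fresh states from $q_0$ to $s$ for each gadget, with $s$ a sink; this guarantees that the set of strings reaching $s$ is exactly the set of gadget strings. For each variable I add a \emph{variable gadget}, a length-two path spelling $t_i f_i$ whose only events are $t_i,f_i$; for each clause $C_j$ I add a \emph{clause gadget}, a path spelling the three literal events of $C_j$ (for example $t_a f_b t_c$ for $x_a\lor\neg x_b\lor x_c$). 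Finally I set $W=n$.

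Both directions should then fall out. From a satisfying assignment $\alpha$, protecting $t_i$ when $\alpha(x_i)=1$ and $f_i$ otherwise gives a policy of cost exactly $n=W$ that covers every variable gadget (one of $t_i,f_i$ is protected) and every clause gadget (a satisfied literal is protected), hence is $\chi$-valid. Conversely, any $\chi$-valid policy of cost at most $n$ must cover every variable gadget, and since that gadget contains only $t_i,f_i$, it protects at least one of them; these $n$ disjoint requirements together with the budget $n$ force exactly one protected event per variable, so $\mathcal P$ encodes a consistent assignment, and covering every clause gadget certifies that this assignment satisfies $\phi$. Because each gadget string consists of pairwise distinct events, $\chi$-validity and ordinary validity coincide on it, so the single construction simultaneously proves hardness of BC-$\chi$-SPP-U and BC-SPP-U, and hence of every less restricted variant in Fig.~\ref{fig001}.

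The main obstacle I anticipate is not the clause encoding but ensuring that a \emph{minimum}-cost policy reads off a \emph{consistent} assignment: the variable gadgets must be uncoverable except by $\{t_i,f_i\}$, and the budget must be calibrated so that the per-variable lower bound of $n$ is met with equality, leaving no room to protect both polarities of a variable or to skip one. The second delicate point is pinning down the accepting language exactly, so that no spurious string (a prefix, or a concatenation routed through a shared state) reaches $s$; the disjoint-chain, sink-$s$ layout is chosen precisely to avoid this. Finally, since the construction is linear in $n+m$, combining it with the sparsification lemma would also rule out a $2^{o(n)}$-time algorithm unless ETH fails.
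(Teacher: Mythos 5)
Your proposal is correct and takes essentially the same approach as the paper: a reduction from 3-SAT with a length-two path $q_0 \to q_{x_i} \to s$ per variable (your $t_i,f_i$ are the paper's $x,x'$), a disjoint path through fresh states spelling the clause literals for each clause, budget $W=n$, and the identical two-direction argument (variable paths force at least one literal per variable, the budget forces exactly one, and clause paths certify satisfaction). Even your opening observation---that with $\gamma\equiv 1$ and $\ell(s)=1$ the $\chi$-validity condition collapses to a path-covering condition, so $\chi$-validity and validity coincide on this instance---mirrors the paper's handling of the same point via the assumption that literals within a clause are distinct.
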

	\begin{proof}
		To establish \NP-hardness, let $\varphi$ be a Boolean formula in 3-CNF with $n$ variables $x_1,\dots,x_n$ and $m$ clauses $C_1,\dots,C_m$. In each clause, we replace every literal $\neg x$ with $x'$. Without loss of generality, we may assume that the literals within each clause are distinct, that is, we may treat each clause as a set.

		We construct an automaton $\A=(Q,\Sigma,\delta,q_0,f)$ with a single initial state $q_0$ and a single accepting (secret) state $f$, as follows.
		\begin{figure}
			\centering
			\includegraphics[scale=1]{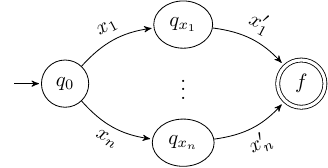}
			\caption{The first step of the reduction.}
			\label{fig1}
		\end{figure}
		For each variable $x$, we add to $\A$ two events, $x$ and $x'$, both of which are protectable, along with two transitions 
		\[
			(q_0,x,q_x) \text{ and } (q_x,x',f)\,, 
		\]
		where $q_x$ is a new state, see Figure~\ref{fig1}. This step introduces $n$ new states and $2n$ transitions.

		Now, for each clause $C_i=\{\lambda_{i,1},\dots,\lambda_{i,k_i}\}$, where $1\le i \le m$ and $1\le k_i \le 3$, we add $k_i$ transitions and $k_i-1$ new states to $\A$. Specifically, we construct the sequence of transitions
		\[
			(q_0,\lambda_{i,1},q_{i,1}), \dots, (q_{i,k_i\shortminus 1},\lambda_{i,k_i},f)\,,
		\]
		where $q_{i,1},\dots,q_{i,k_i\shortminus 1}$ are newly introduced intermediate states, see Figure~\ref{fig2} for an illustration, as well as Examples~\ref{exmp1} and~\ref{exmp2} below. This step introduces at most $2m$ new states and $3m$ transitions.

		\begin{figure}
			\centering
			\includegraphics[scale=1]{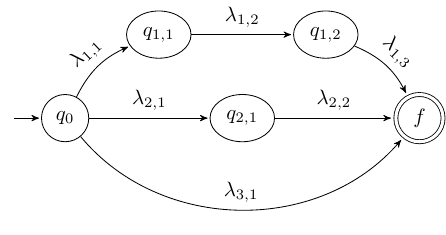}
			\caption{The encoding of three clauses $C_1=\{\lambda_{1,1},\lambda_{1,2},\lambda_{1,3}\}$, $C_2=\{\lambda_{2,1},\lambda_{2,2}\}$, and $C_3=\{\lambda_{3,1}\}$.}
			\label{fig2}
		\end{figure}

		We define the cost and clearance functions uniformly by setting $c(a) = 1$ and $\gamma(a) = 1$ for every event $a \in \Sigma$, meaning that all events in $\A$ are protectable. The security requirement function is defined by setting $\ell(f) = 1$ for a single designated secret state $f$, while all other states have security requirement zero, that is, they are not secret. 
		Finally, we set the budget limit $W$ to be the number of variables, that is, $W=n$.
		
		We now show that the formula $\varphi$ is satisfiable if and only if there is a Parikh-valid protection policy $\mathcal P$ with $C(\mathcal P) \le n$.

		If $\varphi$ is satisfiable, then there is a truth assignment to the variables that satisfies all clauses. Define $\mathcal P$ as the set of all literals that are assigned the value 1. For each variable $x$, the set $\mathcal P$ contains exactly one of $x$ and $x'$, ensuring that the paths in Figure~\ref{fig1} satisfy the security requirement of state $f$. Since the assignment satisfies every clause of $\varphi$, the set $\mathcal P$ contains at least one literal from each clause, thereby ensuring that the paths in Figure~\ref{fig2} also meet the security requirement of state $f$. Therefore, the set $\mathcal P$ forms a protecting policy of cardinality $n$, and thus satisfies the cost constraint $C(\mathcal P) \le n$.

		Conversely, suppose that $\mathcal P$ is a Parikh-valid protecting policy with $C(\mathcal P) \le n$. Since $\mathcal P$ must secure every path from $q_0$ to $f$, the structure in Figure~\ref{fig1} ensures that $\mathcal P$ contains at least one of the literals $x$ or $x'$ for each variable $x$. As there are $n$ such paths and the total cost is bounded by $n$, it follows that $\mathcal P$ contains exactly $n$ events. Therefore, for each variable $x$, the policy includes exactly one of $x$ and $x'$, but not both.
		This selection induces a consistent truth assignment for the variables of the formula $\varphi$: inclusion of $x$ in $\mathcal P$ corresponds to assigning $x=1$, while inclusion of $x'$ corresponds to $x=0$. Moreover, since $\mathcal P$ intersects all paths from $q_0$ to $f$ in Figure~\ref{fig2}, which correspond to the clauses of $\varphi$, the assignment satisfies every clause. Therefore, the policy $\mathcal P$ encodes a satisfying truth assignment for $\varphi$.
	\end{proof}
	
	The following examples illustrate the construction from the proof of Lemma~\ref{lemma2}. In the first example, we consider a satisfiable formula.
\begin{example}\label{exmp1}
	Let
	\(
		\varphi = (x \lor \neg y \lor z) \land (\neg x \lor y \lor \neg z)
	\)
	be a formula over variables $x, y, z$. The assignment $x=y=z=1$ satisfies both clauses, and therefore $\varphi$ is satisfiable.
	We construct an automaton $\A$ with events $\Sigma=\{x,x',y,y',z,z'\}$, all of which are protectable, i.e., $\Sigma_p=\Sigma$, and with the set of states
	\(Q=\{q_0,q_x,q_y,q_z,q_{1,1},q_{1,2},q_{2,1},q_{2,2},f\}\).
	For the variables $x,y,z$, we define the transitions 
	$(q_0, x, q_x)$, $(q_x, x', f)$, $(q_0, y, q_y)$, $(q_y, y', f)$, $(q_0, z, q_z)$, $(q_z, z', f)$, 
	and for the clauses $C_1=\{x, y', z\}$ and $C_2=\{x', y, z'\}$, we define the transitions 
	$(q_0, x, q_{1,1})$, $(q_{1,1}, y', q_{1,2})$, $(q_{1,2}, z, f)$, 
	and 
	$(q_0, x', q_{2,1})$, $(q_{2,1}, y, q_{2,2})$, $(q_{2,2}, z', f)$, 
	see Figure~\ref{fig_encoding_sat} for an illustration.
	The cost and clearance functions are defined by $c(a)=\gamma(a)=1$ for every $a\in \Sigma_p$, and the security requirement is given by $\ell(f)=1$ and $\ell(q)=0$ for all $q \ne f$. The budget $W=3$, the number of variables.
	Since $\varphi$ is satisfiable, there exists a Parikh-valid protecting policy $\mathcal P$. Indeed, the set \(\mathcal{P} = \{x, y, z\}\) intersects every path from $q_0$ to $f$ and satisfies the cost constraint $C(\mathcal{P}) = 3 \le W$. Therefore, $\mathcal{P}$ is a Parikh-valid protecting policy.
	\hfill$\diamond$
\end{example}

	\begin{figure}
		\centering
		\includegraphics[scale=1]{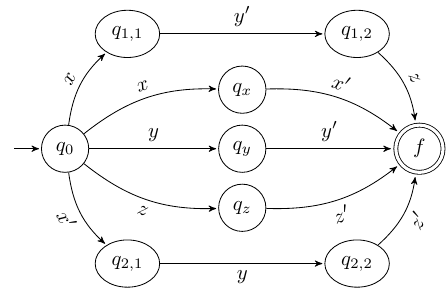}
		\caption{The encoding of $\varphi = (x \lor \neg y \lor z) \land (\neg x \lor y \lor \neg z)$.}
		\label{fig_encoding_sat}
	\end{figure}

	In the second example, we consider an unsatisfiable formula.
	\begin{example}\label{exmp2}
	Let
	\(
		\psi = (x \lor y \lor z) \land (\neg x) \land (\neg y) \land (\neg z)\,.
	\)
	The automaton obtained by the construction is shown in Figure~\ref{fig_encoding_unsat}. The alphabet consists of six events: $x,x',y,y',z,z'$. The edge labeled with $x',y',z'$ represents three parallel transitions from 	$q_0$ to $f$, corresponding to the three clauses $C_2=\{x'\}$, $C_3= \{y'\}$, and $C_4=\{z'\}$. The states include the initial state $q_0$, the secret state $f$, the three states $q_x$, $q_y$, and $q_z$ introduced for the variables, and the two intermediate states created for the first clause. Each event has cost and clearance~1, with $f$ being the only secret state with $\ell(f)=1$. The budget limit is $W=3$.
	In this automaton, any protecting policy within the budget must include exactly one of the events $x$ and $x'$, exactly one of the events $y$ and $y'$, and exactly one of the events $z$ and $z'$. This requirement leaves at least one path from $q_0$ to $f$ that avoids the protected events. Consequently, no policy can simultaneously satisfy both the security requirement and the budget constraint.
	\hfill$\diamond$
\end{example}

	\begin{figure}
		\centering
		\includegraphics[scale=1]{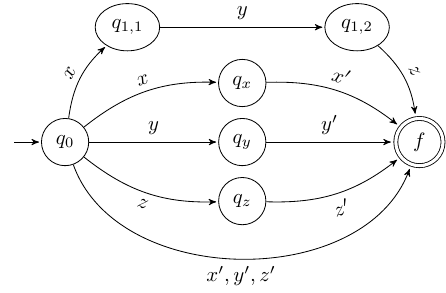}
		\caption{The encoding of $\psi = (x \lor y \lor z) \land (\neg x) \land (\neg y) \land (\neg z)$.}
		\label{fig_encoding_unsat}
	\end{figure}

	The preceding two lemmas yield the following main result.
	\begin{theorem}\label{thm1}
		BC-Parikh-SPP and BC-Parikh-SPP-U are \NP-complete, and Parikh-SPP and Parikh-SPP-U are \NP-hard. The problems remain \NP-hard even if the system has a unique initial state, the cost and clearance functions are constant (i.e., $c(x)=1$ and $\gamma(x)=1$), and the security requirement satisfies $\ell(x)\in \{0,1\}$, with exactly one state designated as secret.
	\end{theorem}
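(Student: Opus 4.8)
The plan is to combine Lemma~\ref{lemma1} and Lemma~\ref{lemma2} with the inclusion hierarchy of Fig.~\ref{fig001}, keeping careful track of the two orthogonal axes along which the variants differ: uniform versus arbitrary clearance ($\gamma\equiv 1$ versus general $\gamma$), and occurrence-based versus distinct-event validity (standard versus $\chi$). I would first dispatch NP-membership. Lemma~\ref{lemma1} already places BC-SPP in NP, and since BC-SPP-U is precisely the restriction of BC-SPP to instances with $\gamma(a)=1$ for all $a$, the identical guess-and-verify procedure certifies BC-SPP-U, so it too lies in NP. I would deliberately refrain from claiming membership for the $\chi$-variants: verifying $\chi$-validity requires certifying that \emph{every} path to a secret state accumulates enough distinct protected clearance, and exhibiting a worst-case path is not obviously in NP (indeed the later $\Sigma_2^P$-completeness result suggests it is not), which is exactly why the theorem asserts only NP-hardness there.

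For NP-hardness I would start from Lemma~\ref{lemma2}, which already yields NP-hardness of BC-$\chi$-SPP-U under the restricted regime (unique initial state, $c\equiv\gamma\equiv 1$, $\ell(x)\in\{0,1\}$ with a single secret state). Propagating hardness along the ``uniform $\Rightarrow$ arbitrary'' axis is immediate: any BC-$\chi$-SPP-U instance is a BC-$\chi$-SPP instance (read with $\gamma\equiv 1$), and likewise any BC-SPP-U instance is a BC-SPP instance, so the identity map is an answer-preserving polynomial reduction and hardness transfers to the general counterpart with all parameter restrictions preserved.

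The one step that genuinely needs an argument is crossing the ``$\chi\Rightarrow$ standard'' axis, i.e.\ deriving hardness of BC-SPP-U (and hence BC-SPP) from the $\chi$-hardness of Lemma~\ref{lemma2}. Here I would exploit a structural feature of the automaton built there: on every path from $q_0$ to $f$ each event occurs at most once, because the variable gadgets spell $xx'$ and each clause gadget spells a sequence of pairwise distinct literals. Hence for every $w\in L(\A)$ reaching $f$ and every event $a$ we have $|w|_a=\chi_w(a)$, so the occurrence-based and distinct-event clearance sums coincide and a policy is valid exactly when it is $\chi$-valid. The very same reduction, read under standard validity, therefore shows that $\varphi$ is satisfiable iff a valid policy of cost at most $n$ exists, giving NP-hardness of BC-SPP-U. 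I expect this equivalence-on-the-constructed-instance to be the crux; everything else is bookkeeping with the inclusions. Combining NP-hardness of BC-SPP-U and BC-SPP with their NP-membership yields NP-completeness of both, while BC-$\chi$-SPP-U and BC-$\chi$-SPP inherit NP-hardness directly from Lemma~\ref{lemma2} and the first inclusion, with the parameter restrictions carried through unchanged.
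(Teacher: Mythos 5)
Your proposal is correct and follows essentially the same route as the paper: NP-membership from Lemma~\ref{lemma1} (with BC-SPP-U handled as a restriction of BC-SPP), NP-hardness from the reduction of Lemma~\ref{lemma2}, propagated to the other variants. The one point where you are more explicit than the paper---checking that on the constructed instances validity and $\chi$-validity coincide, since no event repeats along any path from $q_0$ to $f$ (equivalently, since $\ell(f)=1$ the occurrence-based and distinct-event clearance sums agree)---is precisely the detail the paper compresses into its remark that $\chi$-validity implies validity and that BC-$\chi$-SPP-U is the most restricted variant, and your explicit version is the right way to make that transfer answer-preserving.
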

	\begin{proof}
		Lemma~\ref{lemma1} shows that BC-Parikh-SPP, and hence also BC-Parikh-SPP-U, belong to NP, while Lemma~\ref{lemma2} proves that the considered variants of the problem are \NP-hard. Strong \NP-hardness follows because the reduction of Lemma~\ref{lemma2} produces instances in which all numerical data are constant.
	\end{proof}

	Lemma~\ref{lemma1} suggests an algorithm whose runtime is exponential in the size of the alphabet of protectable events. We now show that, under current knowledge in complexity theory, this algorithm is optimal. Specifically, we show that if any of the considered budget-constrained secret protection problems could be solved in sub-exponential time, then 3-SAT could also be solved in sub-exponential time. 
	\begin{theorem}\label{thm3}
		There is no algorithm that solves BC-Parikh-SPP(-U) in time $2^{o(N)}$, where $N$ is the number of protectable events in the system, unless ETH fails.
	\end{theorem}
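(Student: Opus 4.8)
The plan is to leverage the reduction from Lemma~\ref{lemma2} and verify that it is a \emph{linear} reduction in the relevant parameter, so that a sub-exponential algorithm for any of the budget-constrained problems would translate into a sub-exponential algorithm for 3-SAT, contradicting ETH. Recall that ETH asserts the absence of a $2^{o(n)}$-time algorithm for 3-SAT, where $n$ is the number of \emph{variables}. Since every variant among BC-$\chi$-SPP-U, BC-$\chi$-SPP, BC-SPP-U, and BC-SPP reduces to (is a special case of, or is implied by) the most restricted variant BC-$\chi$-SPP-U, as depicted in Fig.~\ref{fig001}, it suffices to prove the lower bound for all of them simultaneously by arguing about the reduction produced in Lemma~\ref{lemma2}.

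First I would recall the construction of Lemma~\ref{lemma2}: starting from a 3-CNF formula $\varphi$ with $n$ variables and $m$ clauses, it builds an automaton $\A$ whose protectable alphabet consists of exactly the $2n$ literal-events $x_i$ and $x_i'$, so that the number of protectable events is $N = 2n$. The key quantitative observation is that $N$ is \emph{linear} in the number of variables $n$, with no dependence on $m$ beyond the (polynomial) blow-up in the number of states and transitions. This linear relationship is precisely what makes the reduction ETH-preserving.

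Next I would argue the contrapositive. Suppose, for contradiction, that one of the budget-constrained problems admits an algorithm running in time $2^{o(N)}$. Given an arbitrary 3-SAT instance $\varphi$ on $n$ variables, I would apply the polynomial-time reduction of Lemma~\ref{lemma2} to obtain an equivalent instance of the budget-constrained problem with $N = 2n$ protectable events. Running the hypothesized algorithm on this instance decides satisfiability of $\varphi$ in time $2^{o(N)} = 2^{o(2n)} = 2^{o(n)}$, where the additive polynomial cost of the reduction itself is absorbed into the $2^{o(n)}$ term. This yields a $2^{o(n)}$-time algorithm for 3-SAT, contradicting ETH. The same argument applies verbatim to each listed variant, since the single construction of Lemma~\ref{lemma2} is a valid instance of all of them.

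The main subtlety, rather than a genuine obstacle, is making the parameter bookkeeping airtight: one must confirm that the complexity parameter in the lower bound is the number of protectable events $N$ and that $N = 2n$ depends only \emph{linearly} on the ETH parameter $n$, since a super-linear dependence (e.g.\ $N = n^2$) would weaken the conclusion to a $2^{o(\sqrt{N})}$-style bound and break the argument. I would therefore state explicitly that the reduction maps $n$ variables to $N = \Theta(n)$ protectable events, and note that because all events in the construction are protectable, $N$ also coincides with the total alphabet size, so the bound is robust to the precise choice of size measure. Closing the argument is then immediate.
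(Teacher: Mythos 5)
Your proposal is correct and follows essentially the same route as the paper: invoke the reduction from Lemma~\ref{lemma2}, observe that it produces $N=2n$ protectable events from a 3-CNF formula with $n$ variables, and conclude that a $2^{o(N)}$-time algorithm for any of the variants would yield a $2^{o(n)}$-time algorithm for 3-SAT, contradicting ETH. The explicit bookkeeping that $N=\Theta(n)$ and the remark that the single constructed instance serves all four variants are exactly the (partly implicit) ingredients of the paper's argument.
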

	\begin{proof}
		Consider the reduction of 3-SAT from the proof of Lemma~\ref{lemma2}. If the input 3-CNF formula contains $n$ variables, then the constructed system has $N = 2n$ protectable events. Suppose that there exists an algorithm solving BC-Parikh-SPP(-U) in time $2^{o(N)} = 2^{o(n)}$. Then, by reducing any instance of 3-SAT to an instance of BC-Parikh-SPP(-U) in polynomial time and applying this algorithm, we would obtain a sub-exponential-time algorithm for 3-SAT, contradicting the exponential time hypothesis.
	\end{proof}

		\begin{remark}
			It is worth noticing that BC-Indicator-SPP(-U) is also \NP-hard. In particular, the proof of Lemma~\ref{lemma2} shows that BC-Indicator-SPP-U is \NP-hard, which follows from the assumption that all literals within each clause are pairwise distinct. Consequently, there is no algorithm that would solve BC-Indicator-SPP(-U) in time $2^{o(N)}$, where $N$ is the number of protectable events, unless ETH fails.
			We analyze the exact complexity of BC-Indicator-SPP(-U) in Section~\ref{XSPP}.
		\end{remark}

	The intractability results characterize the inherent difficulty of the most challenging instances and do not necessarily reflect the performance on realistic inputs. They raise a fundamental question: \emph{Can SPP be solved efficiently in practice?}

	While Ma et al.~\cite{MaJC24} introduced a polynomial-time heuristic for Parikh-SPP, its scalability and robustness remain limited. In practice, instances with around 100 states may require several minutes to solve, and the approach offers no guarantees on solution quality. In the next section, we present an ILP-based algorithm for Parikh-SPP and show that it performs efficiently even on large-scale systems. In contrast, whether Indicator-SPP admits a practically efficient solution remains an open question.

\section{Solving Parikh-SPP via ILP}
	Integer Linear Programming (ILP) is used to determine the optimal outcome of a linear objective function, subject to linear constraints. Although the integrality constraints render ILP problems \NP-hard, advances in algorithms and solvers have made it practical to solve large and complex instances~\cite{Schrijver}. As a result, ILP has become a widely used tool for addressing a broad range of \NP-hard optimization problems.

	We design an algorithm that solves Parikh-SPP using an ILP solver. Given a candidate policy, its validity can be verified in polynomial time using Dijkstra's shortest paths algorithm~\cite{MaJC24}. In particular, if all shortest paths satisfy the security requirements of secret states, then all paths do as well. Thus, we can examine the policy that includes all protectable events and determine, in polynomial time, whether it satisfies the security constraints. If it does, the SPP instance is solvable; otherwise, it is not. Hence, without loss of generality, we assume that the input instance to Algorithm~\ref{algo1} is solvable.

\begin{algorithm}
\caption{Parikh-SPP solver using ILP}
\label{algo1}
\begin{algorithmic}[1]
	\Require A solvable instance of Parikh-SPP.
	\Ensure Optimal protecting policy $\mathcal P$.

    \State Define binary variables $x_e$ for each $e \in \Sigma_p$.
    \State Set objective function: $\text{Minimize } \sum_{e \in \Sigma_p} c(e) \cdot x_e$.

    \While{true}
        \State Solve the current ILP model.
        \State Get current values of $x_e$. 
        
        \For{each initial state $i$}
        	 \LComment{Verify the current solution.}
        	\State Compute the shortest paths from $i$ to all other states using Dijkstra's algorithm, considering current values of $x_e$ to adjust edge weights based on event clearance $\gamma$.
            \For{each secret state $s$}
            	  \State $d(i,s) \leftarrow$ shortest distance from $i$ to $s$
                \If{$d(i,s) < \ell(s)$}
                    \State add the constraint \(\sum_{e \in \pi} |\pi|_e \cdot \gamma(e) \cdot x_e \ge \ell(s)\) to the ILP model.
                \EndIf
            \EndFor
            \If{a new constraint was added to the ILP model}
            		\LComment{A violation of the solution was found, repeat the while-loop.}
                \State \textbf{break} (from outer for-loop)
             \Else 
             		\LComment{The found solution is a valid policy.}
             		\State \Return the current solution and \textbf{terminate}
            \EndIf
        \EndFor
    \EndWhile
\end{algorithmic}
\end{algorithm}

	The algorithm begins by introducing a binary variable for each protectable event: a value of 1 indicates that the event is included in the policy, and 0 indicates that it is not. The objective function, to be minimized, is defined as the total cost of the events selected for the policy, which is initially empty.
	An ILP solver is then used to compute a solution to the current ILP model. This solution is verified using Dijkstra's algorithm to ensure that all shortest paths satisfy the specified security requirements. If this condition holds, the solution is valid and the algorithm terminates.

	Otherwise, there is a shortest path $\pi$ that violates the security requirement of a secret state $s$. The algorithm then augments the ILP model with an additional constraint enforcing the security requirement along this path. Specifically, it ensures that the sum of the clearances of the events on $\pi$ meets or exceeds the required security level $\ell(s)$. This is expressed as $\sum_{e\in \pi} |\pi|_e \cdot \gamma(e) \cdot x_e \ge \ell(s)$, where $|\pi|_e$ denotes the number of occurrences of event $e$ in path $\pi$. After adding all such constraints to the ILP model, the algorithm proceeds to the next iteration of the while-loop. Since the instance is assumed to be solvable, the loop is guaranteed to eventually terminate with a valid solution.

	We implemented Algorithm~\ref{algo1} in Python, using \emph{PuLP} with the \emph{CBC} solver. All experiments were run on an Ubuntu 24.04.3 LTS machine with 32 Intel(R) Xeon(R) CPU E5-2660~v2 @ 2.20\,GHz and 246\,GB RAM. We ran up to 24 threads in parallel. We considered two data sources.
	(1) We derived Parikh-SPP instances from public automata benchmarks. In these instances, all events are protectable with $\gamma(\cdot)=c(\cdot)=1$ (this setting already makes the problem \NP-hard), and every accepting state is treated as a secret state with security level 1. Trivially unsolvable cases were filtered out.
	(2) We generated random NFAs using the Tabakov-Vardi model~\cite{TabakovV05}, using the MATA library~\cite{mata}, with $q$ states and the alphabet size $k$. For each combination of $q\in\{100, 500, 1000, 5000, 10000, 50000,\allowbreak 100000\}$ and $k\in\{2,3,5,10\}$, we generated 100 random instances, varying the transition density uniformly between $0.8$ and $5.0$. Each generated NFA was converted to a Parikh-SPP instance in which all events were protectable, and $\gamma(a)$, $c(a)$ and $\ell(f)$ were drawn at random from $\{1,...,10\}$. If an instance turned out to be trivially unsolvable, it was regenerated. In total, we solved 2800 random instances.

	For each instance, we logged the number of states, the optimal policy cost, the solver's status, the number of iterations, and the runtime. All instances, data, and scripts are available in the git repository \url{https://apollo.inf.upol.cz:81/vecerajakub/SPP-automata}.

	\begin{figure}
		\centering
		\includegraphics[scale=1]{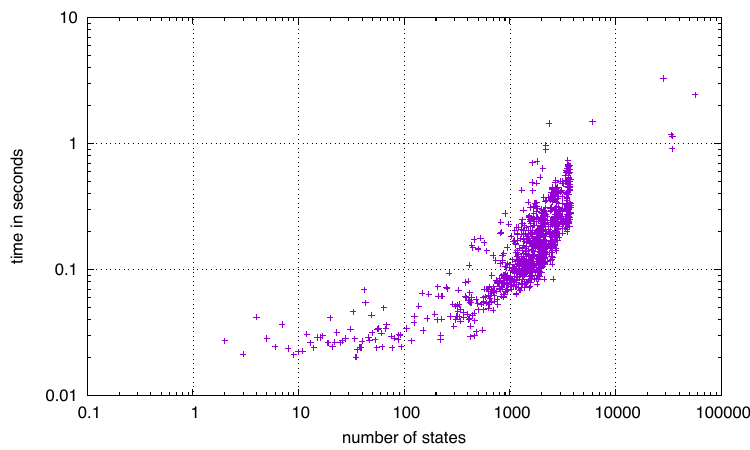}
		\caption{Runtime on real-derived Parikh-SPP instances.}
		\label{plot_real}
	\end{figure}

	For the real-based instances, Figure~\ref{plot_real} plots the runtime versus the number of states. All instances were solved quickly. Since the instances are not native Parikh-SPP data, they rather test structural difficulty than domain-specific Parikh-SPP semantics.

	For the random instances, our results show that the approach scales to NFAs with hundreds of thousands of states. Table~\ref{tab:generated} summarizes the median runtime performance over 100 runs for each size and alphabet combination (in seconds). Even for the largest instances with 100,000 states and 10 events, the median time was about 19 minutes, and the slowest 5\% of the instances were solved within 44 minutes. At moderate sizes (e.g., 10,000 states with $k=10$), the median runtime remained around 30 seconds. 
	
	For a fixed number of states, instances with a larger alphabet are generally harder, since more events and transitions increase the ILP size and the number of constraints generated.
	For example, for $q=\text{10,000}$ the median runtime rises from $2.56$s ($k=3$) to $30.28$s ($k=10$), and for $q=\text{100,000}$ from $179.32$s ($k=3$) to $1133.86$s ($k=10$). The few instances in the slowest 5\% typically had the highest transition density, which leads to more complex shortest-path computations and more constraint-generation iterations.

\begin{table}
\ra{1.25}
\centering
\caption{Median runtime in seconds across 100 runs.}
\label{tab:generated}
\begin{tabular}{rrrrr}
\toprule
$q$ states & $k{=}2$ & $k{=}3$ & $k{=}5$ & $k{=}10$ \\
\midrule100 & 0.039 & 0.081 & 0.247 & 1.783  \\
500 & 0.060  & 0.116  & 0.330  & 2.519  \\
1,000 & 0.087  & 0.152 & 0.442 & 2.780 \\
5,000 & 0.491 & 0.868  & 2.217 & 13.562  \\
10,000 & 1.593  & 2.563  & 5.323 & 30.275  \\
50,000 & 43.201  & 58.465 & 81.393 & 268.409 \\
100,000 & 163.626 & 179.318 & 228.184 & 1133.861  \\
\bottomrule
\end{tabular}
\end{table}

	One observation is the impact of multiple initial states on performance. For the randomly generated automata, the number of initial states grows with the model size (about 0.1\% of states were initial in those models). We ran a separate experiment generating similar random instances but restricting each automaton to a single initial state. The performance on these instances was noticeably better, especially for large systems. Table~\ref{tab:generated_oneinit} lists the median runtimes.
	We see that at 100,000 states and $k=10$, the median runtime drops to about 175s under the single-initial-state restriction, compared to 1134s in the general case. This trend is illustrated in Figure~\ref{plot_compare_median}, which compares the median running times for multi-initial vs. single-initial random instances. Intuitively, multiple initial states require the clearance constraints to be satisfied from each of those starting points, which yields a larger ILP and more constraints in the process of solving.

\begin{table}
\ra{1.25}
\centering
\caption{Random instances with one initial state---median runtime in seconds across 100 runs.}
\label{tab:generated_oneinit}
\begin{tabular}{rrrrr}
\toprule
$q$ states & $k{=}2$ & $k{=}3$ & $k{=}5$ & $k{=}10$ \\
\midrule100 & 0.044 & 0.078 & 0.261 & 1.657 \\
500 & 0.064 & 0.128 & 0.326  & 2.517 \\
1,000 & 0.078 & 0.151 & 0.458 & 2.671 \\
5,000 & 0.295 & 0.531 & 1.708 & 9.412 \\
10,000 & 0.642 & 1.237 & 3.328 & 18.764 \\
50,000 & 4.005  & 8.563  & 21.406 & 86.610 \\
100,000 & 9.458  & 17.079 & 34.733 & 174.971 \\
\bottomrule
\end{tabular}
\end{table}

	\begin{figure}
		\centering
		\includegraphics[scale=1]{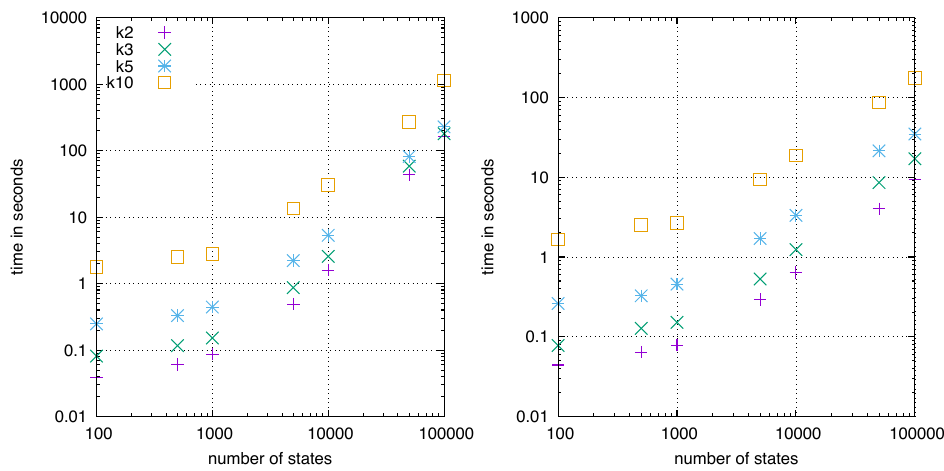}
		\caption{Median runtime on random instances with multiple initials (left) vs. a single initial state (right).}
		\label{plot_compare_median}
	\end{figure}

	In summary, our ILP-based approach scales to $10^5$ states and handles dense transitions. The runtime grows predictably with the number of states and alphabet size.
	Despite the underlying hardness of Parikh-SPP, these results indicate that Algorithm~\ref{algo1} provides an exact method for solving relatively large instances.

\section{Complexity of Indicator-SPP}\label{XSPP}
	In this section, we discuss BC-$\chi$-SPP where $\chi$ consists of indicator functions. We call this variant BC-Indicator-SPP, and prove that it is $\Sigma_{2}^{P}$-complete. To this end, we first show that verifying indicator-validity is computationally more challenging than verifying Parikh-validity.
 	\begin{proposition}\label{lemma001}
 		Given an automaton $\A$ and a policy $\mathcal P$, to verify whether $\mathcal P$ is indicator-valid is \coNP-complete.
 	\end{proposition}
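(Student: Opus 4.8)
The plan is to prove membership in \coNP{} and \coNP-hardness separately. For membership, recall that $\mathcal P$ is \emph{not} $\chi$-valid precisely when there exists a string $w\in L(\A)$ reaching some secret state $q$ such that the distinct protected events occurring in $w$ accumulate clearance strictly below $\ell(q)$, i.e.\ $\sum_{a\in\mathcal P}\gamma(a)\cdot\chi_w(a) < \ell(q)$. The key observation is that this witness need not be long: whether a violation exists depends only on the \emph{set} of events appearing along the path, not on the multiplicities. So I would guess a secret state $q$, a subset $A\subseteq\Sigma$ of events with $\sum_{a\in\mathcal P\cap A}\gamma(a) < \ell(q)$, and then verify in polynomial time that there is a path from an initial state to $q$ using exactly (or at most) the events in $A$. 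This reachability-with-a-restricted-alphabet check is a simple graph reachability question in the subautomaton obtained by deleting all transitions labeled outside $A$, hence computable in polynomial time. This places the complement in \NP, so verifying $\chi$-validity is in \coNP.

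For hardness, I would reduce from the complement of a standard \NP-complete problem; the natural choice is to reduce \emph{directly from} the reduction already built in Lemma~\ref{lemma2}, or more cleanly from a fresh reduction whose yes-instances correspond to \emph{non}-satisfiability. The cleanest route is to show that the \coNP-complete problem of deciding whether a 3-DNF formula is a \emph{tautology} (equivalently, whether a 3-CNF formula is unsatisfiable) reduces to $\chi$-validity checking. Given a 3-CNF formula $\varphi$, I would build an automaton in which paths from the initial state to the secret state correspond to truth assignments (or to clause-violating choices), arranging the protectable events, the clearance function $\gamma$, and the security level $\ell(f)$ so that a path fails the $\chi$-clearance threshold exactly when the corresponding assignment falsifies $\varphi$. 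Then $\mathcal P$ (taken as all protectable events) is $\chi$-valid if and only if every path meets the threshold, i.e.\ if and only if $\varphi$ is unsatisfiable. Since the number of \emph{distinct} events on a path—not their repetitions—governs the clearance, the $\chi$-semantics is exactly what lets a single guessed assignment encode a counterexample.

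The main obstacle I anticipate is engineering the automaton so that the distinct-event semantics genuinely captures the Boolean structure: I must ensure that each path selects a coherent truth value for every variable (exactly one of $x$, $x'$ appears among the distinct events) and that falsifying a clause translates into a clearance deficit under $\chi$-counting. Because $\chi_w$ ignores repetition, I have to be careful that a path cannot ``cheat'' by revisiting gadgets to artificially boost the count of distinct events beyond what the intended assignment permits; a layered or acyclic gadget structure avoids this. I would also verify that the budget/clearance arithmetic works out with the binary values $\gamma,\ell\in\{0,1\}$ if I want the strongest form of the hardness, mirroring the restricted parameters used in Lemma~\ref{lemma2}.

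Combining the two directions gives \coNP-completeness. I expect the membership argument to be routine once the ``short witness is a subset of events'' insight is stated; the hardness construction, and in particular arguing its correctness against the possibility of spurious distinct-event accumulation along cyclic paths, will be the technically delicate part.
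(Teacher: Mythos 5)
Your \coNP{} membership argument is correct and is essentially the paper's: the paper guesses a violating string of length at most $|Q|$ (cutting cycles only shrinks the set of distinct events on a path), whereas you guess the offending event set $A$ and check reachability in the subautomaton restricted to $A$; both work, and your variant cleanly sidesteps the short-witness argument.

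The hardness half, however, has a genuine gap, in two respects. First, your specification is internally inconsistent: if a path fails the $\chi$-threshold exactly when its assignment \emph{falsifies} the 3-CNF $\varphi$, then $\mathcal P$ is $\chi$-valid iff every assignment satisfies $\varphi$, i.e., iff $\varphi$ is a \emph{tautology}---not, as you conclude, iff $\varphi$ is unsatisfiable. Tautology of a CNF is decidable in polynomial time (every clause must contain a complementary pair of literals), so a construction meeting your stated specification would establish nothing. The polarity must be reversed: violating (low-distinct-count) paths must correspond to \emph{satisfying} assignments, so that non-$\chi$-validity of $\mathcal P$ is equivalent to satisfiability of $\varphi$. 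Second, the construction realizing this---the actual content of the proof---is never given; you only list the properties it should have. For reference, the paper's gadget is a single chain of states, one per clause $C_1,\dots,C_m$ followed by one per variable pair $\{x,x'\}$; each clause state is entered via any one of that clause's literals, each variable state via $x$ or $x'$, the policy consists of all literals with $\gamma\equiv 1$, and $\ell(f)=n+1$. Every path picks up exactly $n$ distinct events in the variable section, defining a consistent assignment, and so has fewer than $n+1$ distinct events overall iff all its clause choices are covered by that assignment, i.e., iff the assignment satisfies every clause. Your worry about cyclic ``cheating'' is moot under this polarity: revisiting transitions can only add distinct events, which pushes a path \emph{toward} meeting the threshold, and non-validity is an existential statement, so harmless extra paths do not matter.

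Finally, your hope of getting the ``strongest form'' of hardness with binary values $\gamma,\ell\in\{0,1\}$, mirroring Lemma~\ref{lemma2}, cannot be realized for this proposition. If $\ell(q)\le 1$ for every secret state, then $\mathcal P$ fails to be $\chi$-valid iff some secret state is reachable along a path avoiding all protected events of positive clearance, which is plain graph reachability and hence polynomial-time decidable. A non-constant security level (the paper uses $\ell(f)=n+1$) is unavoidable here; the binary restriction is achievable in Lemma~\ref{lemma2} only because there the hardness lies in \emph{choosing} the policy, not in \emph{verifying} it.
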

	\begin{proof}
		To verify indicator-validity is to check there is no secret state $f$ and execution from an initial state to $f$ whose set of symbols $S \subseteq \mathcal P$  along the execution satisfies $\sum_{a\in S} \gamma(a) < \ell(f)$. To verify the complement is an NP search: we guess a state $f$ and a sequence $w$ of length at most the number of states and compute the weight of $w$ in polynomial time. Thus, to verify indicator-validity is in \coNP.
		
		To establish \coNP-hardness of the indicator-validity verification, let $\varphi$ be a Boolean formula in 3-CNF with $n$ variables $X=\{x_1,\dots,x_n\}$ and $m$ clauses $C_1,\dots,C_m$. We denote the negation of $x$ by $x'$. Without loss of generality, we assume that the literals in each clause are distinct, i.e., we may treat each clause as a set.
 		
 		From $\varphi$, construct the automaton $\A=(Q,\Sigma,\delta,q_0,\{q_{m+n}\})$ with a single initial state, $q_0$, and the set of states consisting of all clauses and all two-element sets of literals corresponding to a variable, that is, 
 		\(
 			Q = \{C_1,\dots,C_m\} \cup  \{ C_{x} =\{x,x'\}  \mid x\in X\}. 
 		\)
 		To simplify the presentation, we choose a fixed order of the states of $Q$, so that we can write $Q=\{q_0\} \cup \{q_1,\dots,q_{m+n}\}$, where $q_0$ is the initial state, $q_{i}$ is the clause $C_i$, for $i=1,\dots,m$, and $q_{m+1},\dots,q_{m+n}$ are the sets $C_{x_1},\dots,C_{x_n}$, respectively. We set the state $q_{m+n}$ secret with $\ell(q_{m+n})=n+1$; the other states are nonsecret.
 		The transitions are defined as follows: for every $x \in q_{i+1}$, we add the transition $(q_i,x,q_{i+1})$. Notice that all paths from $q_0$ to $q_{m+n}$ are of length $m+n$; see Figure~\ref{fig001b}.
		\begin{figure}
			\centering
			\includegraphics[scale=1]{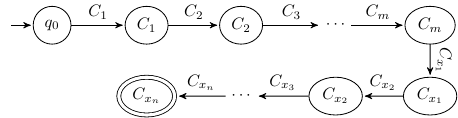}
			\caption{An illustration of the automaton $\A$.}
			\label{fig001b}
		\end{figure}
		
 		We now show that $\varphi$ is satisfiable if and only if the policy $\mathcal P = \bigcup_{x\in X} C_x$ is not indicator-valid.
 		
		If $\varphi$ is satisfiable, then there is an assignment to the variables that satisfies all clauses. Let $V$ be the set of literals that are assigned the value 1. Then $|V|=n$, and since $V$ satisfies all clauses, it contains at least one literal of every clause. Therefore, there is a string $w\in V^{m+n}$ leading from $q_0$ to $q_{m+n}$ that violates the requirement $\ell(q_{m+n}) = n+1$, since $w$ contains at most $|V|=n$ different events. Thus, $\mathcal P$ is not indicator-valid.
 		
		On the other hand, we assume that $\mathcal P$ is not indicator-valid, and consider a string $w$ over $\mathcal P$ from $q_0$ to $q_{m+n}$ that violates indicator-validity. Since $w$ leads through the states corresponding to the sets $C_{x}=\{x,x'\}$, for every $x\in X$, the string $w$ contains at least $n$ different events. These events correspond to the literals of $n$ different variables. Since $w$ violates indicator-validity, it contains exactly $n$ distinct events. However, these events must define a satisfying assignment for $\varphi$, because the string $w$ also goes through all the states that correspond to the clauses $C_1,\dots,C_m$, and hence the $i$th event of $w$ satisfies the clause $C_i$. Hence, $\varphi$ is satisfiable.
	\end{proof}
 	
	Recall that the class $\Sigma_2^P$ consists of problems that can be expressed as 
	\(
	L = \{w \mid \exists x \ \forall y \ R(w, x, y)\},
	\)
	where $R$ is a polynomial-time computable predicate. Intuitively, a language $L$ is in $\Sigma_2^P$ if there is a polynomial-time verifier that, given a guessed string $x$, can query an \NP oracle to verify that a condition over $y$ holds. An oracle Turing machine is a polynomial-time machine that can make queries to another decision problem (the \emph{oracle}) and receive instantaneous answers. A nondeterministic polynomial-time machine with access to an $\NP$ (or $\coNP$) oracle, denoted by $\NP^{\NP}$ (or $\NP^{\coNP}$), captures the computational power of $\Sigma_2^P$. We make use of this characterization in the proof of the main result of this section.
	
 	\begin{theorem}\label{thm_main}
 		 BC-Indicator-SPP(-U) is $\Sigma_2^P$-complete.
 	\end{theorem}
 	\begin{proof}
 		To show that BC-Indicator-SPP belongs to $\Sigma_2^P$, we guess a policy $\mathcal P \subseteq \Sigma_p$ with the cost not exceeding the budget constraint, and verify that $\mathcal P$ is indicator-valid. Since the guess of $\mathcal P$ can be done in nondeterministic polynomial time, the cost can be computed in polynomial time, and the check whether $\mathcal P$ is indicator-valid is a \coNP-complete problem by Proposition~\ref{lemma001}, we obtain that BC-Indicator-SPP belongs to $\text{NP}^{\text{coNP}} = \Sigma_2^P$.
 		
 		To prove hardness, we reduce the $\Sigma_2^P$-complete problem of 3-QSAT$_2$ \cite{Stockmeyer76}. The problem asks whether a formula of the form $\exists X\, \forall Y\, \varphi(X,Y)$, where $\varphi(X,Y)$ is a Boolean formula in 3-DNF and $X,Y$ are disjoint sets of variables, is satisfiable.
 		
 		To this end, let $\exists X\, \forall Y\, \varphi(X,Y)$ be a 3-QSAT$_2$ formula with $X=\{x_1,\dots,x_n\}$, $Y=\{y_1,\dots,y_r\}$, and with $m$ conjuncts $C_1,\dots,C_m$. In each conjunct, we replace every literal $\neg z$ with a fresh symbol $z'$. Without loss of generality, we assume that the literals within each conjunct are distinct, allowing us to treat each conjunct as a set. Additionally, we denote by $\overline{X}$ the set of all literals over $X$, that is, $\overline{X}=\{x,x' \mid x \in X\}$. Analogously, we define the set $\overline{Y}$ of $Y$-literals.
 		
		We construct an automaton $\A=(Q,\Sigma,\delta,q_0,\{f_1,f_2\})$ with a single initial state $q_0$ and two secret states $f_1$ and $f_2$. The alphabet of $\A$ consists of all literals, that is,  $\Sigma = \overline{X} \cup \overline{Y}$.
		The transition function of $\A$ is defined as follows. For each variable $x\in X$, we add two transitions 
		$
			(q_0,x,q_x) \text{ and } (q_x,x',f_1), 
		$
		where $q_x$ is a new state, and for each variable $y\in Y$, we add two transitions 
		\[
			(q_0,y,f_1) \text{ and } (q_0,y',f_1)\,,
		\]
		see Figure~\ref{fig1b} for an illustration. This step introduces $n$ new states and $2(n+r)$ transitions.
		
		\begin{figure}[t]
			\centering
			\includegraphics[scale=1]{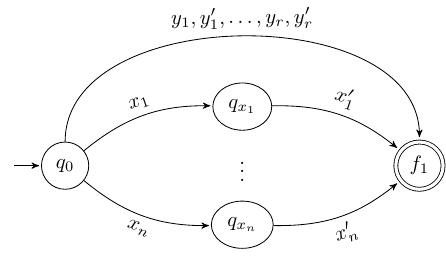}
     	\caption{The first step of the reduction.}
     	\label{fig1b}
		\end{figure}

		For each conjunct $C_i = \{\lambda_{i,1}, \dots, \lambda_{i,k_i}\}$, $1\le i \le m$, we add a state $C_i$ and $k_i$ transitions from $C_{i-1}$ to $C_i$ under the literals of $C_i$, that is, we add the transitions
		\[
				(C_{i-1},\lambda_{i,1},C_i), 	(C_{i-1},\lambda_{i,2},C_i), \dots, (C_{i-1},\lambda_{i,k_i},C_i)\,,
		\]
		where $C_0$ denotes the initial state $q_0$.
 		Finally, for every $y_i \in Y$, we add a new state $C_{y_i}$ together with two transitions from $C_{y_{i-1}}$ to $C_{y_i}$ under $y_i$ and $y_i'$, that is, we add 
 		\[
 			(C_{y_{i-1}},y_i,C_{y_i}) \text{ and } (C_{y_{i-1}},y_i',C_{y_i})\,,
 		\]
 		where $C_{y_0} = C_m$ and $C_{y_{r}}$ is denoted by $f_2$, see Figure~\ref{fig2b} for an illustration.
 		This step introduces $m+r$ new states and $2r + \sum_{i=1}^{m} k_i$ transitions, which is polynomial in the size of the formula.
     
     \begin{figure}[t]
     \centering
		 \includegraphics[scale=1]{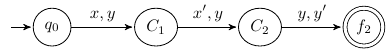}
     \caption{The encoding of the formula $\varphi = C_1 \lor C_2$ where $C_1=\{x, y\}$, $C_2=\{x', y\}$, $Y=\{y\}$, and $f_2=C_y$.}
     \label{fig2b}
     \end{figure}
     
 		To complete the construction, we define the cost and clearance functions uniformly by setting $c(a) = 1$ and $\gamma(a) = 1$ for every event $a \in \Sigma$, meaning that all events in $\A$ are protectable. The security requirement function is defined by setting $\ell(f_1) = 1$ and $\ell(f_2)=r+1$, while all the other states have security requirement zero. Finally, we set $W=2r+n$.
 		
 		We now show that the formula $\exists X\, \forall Y\, \varphi(X,Y)$ is satisfiable if and only if there is an indicator-valid protection policy $\mathcal P$ such that $C(\mathcal P) \le W$. To simplify the arguments, we state the following two claims.
 		
	\begin{claim}\label{lem:shape}
		Every indicator-valid policy $\mathcal P$ with $C(\mathcal P)\le W$ satisfies $C(\mathcal P)=2r+n$ and has the form
		\(
	  	\mathcal P = \overline{Y}\ \cup\ \{\text{exactly one of }x,x' \mid x\in X \}\,.
		\)
	\end{claim}
	\begin{proof}
		For a $Y$-literal $z$, a path from $q_0$ to $f_1$ is the single edge labelled $z$; since $\ell(f_1)=1$, indicator-validity forces $z\in\mathcal P$. Hence $\overline{Y}\subseteq\mathcal P$, contributing cost $2r$. Analogously, for each $x\in X$, a path from $q_0$ to $f_1$ is $q_0\to^{x} q_{x}\to^{x'} f_1$, which uses the two distinct events $x,x'$; as $\ell(f_1)=1$, $\mathcal P$ contains at least one of them, contributing cost at least $n$ in total. Thus, $C(\mathcal P)\ge 2r+n=W$, and the assumption $C(\mathcal P)\le W$ implies that $\mathcal P$ contains exactly one literal of each pair $\{x,x'\}$.
	\end{proof}
 		
 	Let $\nu_X$ be an assignment to the $X$-variables, and let
	\begin{equation}\label{eq:trueX}
	   \mathcal P = \overline{Y} \cup \{\lambda\in\overline{X} \mid \lambda \text{ is true under } \nu_X\}\,.
	\end{equation}

	\begin{claim}\label{lem:clear}
		The minimum, over all paths from $q_0$ to $f_2$, of the indicator clearance	$\bigl|\{\text{events of the path}\}\cap\mathcal P\bigr|$ equals $r$ if there is a $Y$-assignment $\nu_Y$ with $\varphi(\nu_X,\nu_Y)=0$, and is at least $r+1$ otherwise.
	\end{claim}
	\begin{proof}
		A path from $q_0$ to $f_2$ chooses a literal $\lambda_i\in C_i$ for $i=1,\dots,m$ and a literal $b_j\in\{y_j,y_j'\}$ for $j=1,\dots,r$. Its clearance is $\bigl|\{\lambda_1,\dots,\lambda_m,b_1,\dots,b_r\}\cap\mathcal P\bigr|$. The $r$ events $b_1,\dots,b_r$ are pairwise distinct and all lie in $\mathcal P$; therefore, they contribute exactly $r$. Let $\nu_Y$	be the $Y$-assignment that makes each chosen $b_j$ \emph{false}. We claim that $\lambda_i\in\mathcal P$ and $\lambda_i\notin\{b_1,\dots,b_r\}$ if and only if $\lambda_i$ is \emph{true} under $\nu_X \cup \nu_Y$: 
		If $\lambda_i$ is an $X$-literal, then $\lambda_i\notin\{b_1,\dots,b_r\}$, and by \eqref{eq:trueX} $\lambda_i\in\mathcal P$ iff $\lambda_i$ is true under $\nu_X$.
		If $\lambda_i$ is a $Y$-literal, then $\lambda_i\in\mathcal P$; and $\lambda_i\notin\{b_1,\dots,b_r\}$ iff $\lambda_i$ is true under $\nu_Y$.
		Hence the clearance equals $r$ plus the number of distinct chosen literals $\lambda_i$ that are true under $\nu_X \cup \nu_Y$.
		For a fixed $\nu_Y$, pick $\lambda_i\in C_i$ to be a literal false under $\nu_X \cup \nu_Y$ whenever one exists. Such a false literal exists iff $C_i$ is \emph{not} satisfied by $\nu_X \cup \nu_Y$. Therefore, the minimal clearance for $\nu_Y$ is exactly $r$ if every conjunct is unsatisfied, i.e.\ $\varphi(\nu_X,\nu_Y)=0$, and is at least $r+1$ otherwise.
	\end{proof}
		
	$(\Rightarrow)$ If the formula $\exists X\, \forall Y\, \varphi(X,Y)$ is satisfiable, then there is a truth assignment $\nu_X$ to the $X$-variables that satisfies the formula $\forall Y\, \varphi(X,Y)$. Let $\mathcal P=\overline{Y}\cup \{\lambda\in\overline{X} \mid \lambda \text{ is true under } \nu_X\}$, then $C(\mathcal P)=2r+n=W$ and $\mathcal P$ contains every $Y$-literal and one literal of each pair $\{x,x'\}$, that is, it satisfies the secret state $f_1$. Since no $\nu_Y$ makes $\varphi(\nu_X,\nu_Y)=0$, Claim~\ref{lem:clear} gives minimal clearance to $f_2$ at least $r+1=\ell(f_2)$, and hence $\mathcal P$ also satisfies $f_2$. Thus $\mathcal P$ is indicator-valid and within budget.
	
	$(\Leftarrow)$ Suppose $\mathcal P$ is indicator-valid with $C(\mathcal P)\le W$. By Claim~\ref{lem:shape}, $\mathcal P=\overline{Y}\cup\mathcal P_X$, where $\mathcal P_X$ contains exactly one literal of each pair $\{x,x'\}$. Let $\nu_X$ be the assignment defined by $\nu_X(x)=1$ if $x\in\mathcal P$ and $\nu_X(x)=0$ if $x'\in\mathcal P$; by~\eqref{eq:trueX}, $\mathcal P_X$ is exactly the set of $X$-literals true under $\nu_X$. As $\mathcal P$ satisfies $f_2$, every path to $f_2$ has clearance at least $r+1$, and hence by Claim~\ref{lem:clear} there is no $\nu_Y$ with $\varphi(\nu_X,\nu_Y)=0$; that is, $\nu_X$ satisfies $ \forall Y\, \varphi(X,Y)$, and the formula is satisfiable.
 	\end{proof}
 
	The following examples illustrate the construction.
 
 \begin{example}\label{ex:14}
	 Let
 	\(
    \varphi=\exists x_1,x_2 \forall y_1,y_2.(x_2\land\neg y_1)\lor(x_1\land y_2)\lor(\neg x_2\land\neg y_2)
 	\)
 	with conjuncts $C_1=\{x_2,y_1'\}$, $C_2=\{x_1,y_2\}$, $C_3=\{x_2',y_2'\}$. The assignment $\nu_X=(x_1=1,x_2=0)$ satisfies $\forall y_1,y_2\,\varphi$, and hence $\varphi$ is satisfiable. Here $n=r=2$, $m=3$, and the budget is $W=2r+n=6$. The automaton produced by the construction is shown in Figure~\ref{fig:ex14}; all events have cost and clearance $1$, $\ell(f_1)=1$, and $\ell(f_2)=r+1=3$.
	\begin{figure}
		\centering
		\includegraphics[scale=1]{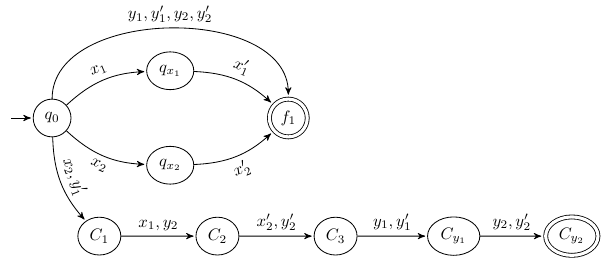}
		\caption{The automaton for $\varphi = \exists x_1,x_2$ $\forall y_1,y_2.$ $(x_2\land \neg y_1)$ $\lor$ $(x_1\land y_2)$ $\lor $ $(\neg x_2\land \neg y_2)$.}
		\label{fig:ex14}
	\end{figure}
	Consider the policy \(\mathcal P=\{x_1, x_2', y_1, y_1', y_2, y_2'\,\}\) corresponding to $\nu_X$ with \(C(\mathcal P)=6=W\). Under $\nu_X=(1,0)$, we have $C_1=x_2\land\neg y_1\equiv 0$, $C_2=x_1\land y_2\equiv y_2$, $C_3=\neg x_2\land\neg y_2\equiv\neg y_2$, and hence $C_1$ is never satisfied, while for every $\nu_Y$ exactly one of $C_2,C_3$ is satisfied, namely $C_2$ when $y_2=1$ and $C_3$ when $y_2=0$. The four universal assignments behave as discussed in Table~\ref{table001}.
	\begin{table}
	\caption{The four universal assignments.}
	\label{table001}
	\centering
	\begin{tabular}{cccc}
	\toprule
	$(y_1,y_2)$ & $(b_1,b_2)$ & Satisfied conjunct & Clearance \\
	\midrule
	$(0,0)$ & $(y_1,y_2)$   & $C_3$ ($x_2',y_2'$ true) & $2+1=3$ \\
	$(0,1)$ & $(y_1,y_2')$  & $C_2$ ($x_1,y_2$ true)   & $2+1=3$ \\
	$(1,0)$ & $(y_1',y_2)$  & $C_3$                    & $2+1=3$ \\
	$(1,1)$ & $(y_1',y_2')$ & $C_2$                    & $2+1=3$ \\
	\bottomrule
	\end{tabular}
	\end{table}
	Hence every path from $q_0$ to $f_2$ has clearance at least $3=\ell(f_2)$, that is, $\mathcal P$ is indicator-valid and meets the budget. 
	\hfill$\diamond$
 \end{example}

\begin{example}\label{ex:15}
	Let
	\(
	   \psi=\exists x_1,x_2 \forall y_1,y_2.(x_1\land y_1)\lor(x_2\land\neg y_1\land y_2)
	   \lor(\neg x_1\land\neg x_2\land y_2)
	\)
	with conjuncts $C_1=\{x_1,y_1\}$, $C_2=\{x_2,y_1',y_2\}$,	$C_3=\{x_1',x_2',y_2\}$. No assignment to $x_1,x_2$ satisfies $\psi$ for all	$y_1,y_2$, that is, $\psi$ is false. Here $n=r=2$, $m=3$, $W=6$, $\ell(f_1)=1$, and $\ell(f_2)=3$; the automaton is shown in	Figure~\ref{fig:ex15}.
	Every indicator-valid policy $\mathcal P$ with $C(\mathcal  P)\le W$ has the form	$\overline Y\cup \{\lambda\in\overline{X} \mid \lambda \text{ is true under } \nu_X\}$ for some assignment $\nu_X$ to the $X$-variables. We show that no indicator-valid policy within the budget exists by exhibiting a	path to $f_2$ of clearance $2<\ell(f_2)$.
	All four possible policies fail in the same way; a witnessing universal assignment and	the resulting clearance are listed below:
	\begin{table}[h]
	\centering
	\begin{tabular}{cccc}
	\toprule
	$\nu_X=(x_1,x_2)$ & $\mathcal P\setminus\overline Y$ & Witness $(y_1,y_2)$ & Clearance \\
	\midrule
	$(1,1)$ & $\{x_1,x_2\}$   & $(0,0)$ & $2$ \\
	$(1,0)$ & $\{x_1,x_2'\}$  & $(0,0)$ & $2$ \\
	$(0,1)$ & $\{x_1',x_2\}$  & $(1,0)$ & $2$ \\
	$(0,0)$ & $\{x_1',x_2'\}$ & $(0,0)$ & $2$ \\
	\bottomrule
	\end{tabular}
	\end{table}\\
	For instance, for $\nu_X=(1,1)$ and $(y_1,y_2)=(0,0)$, $C_1$ is falsified through $y_1$, $C_2$ through $y_2$, and $C_3$ through $x_1'\notin\mathcal P$, resulting in the path $y_1 y_2 x_1' y_1 y_2$ giving clearance $|\{y_1,y_2\}|=2$. Since none of these policies is valid, the construction	correctly reports $\psi$ as unsatisfiable.
	\hfill$\diamond$
\end{example}

\begin{figure}[t]
    \centering
		\includegraphics[scale=1]{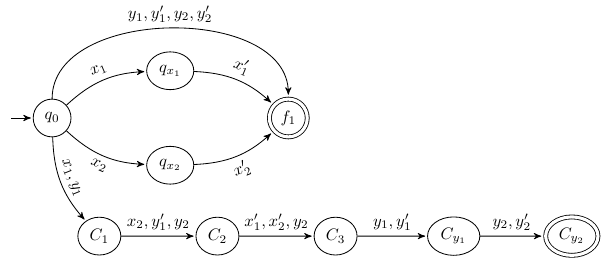}
    \caption{The automaton for $\psi = \exists x_1, x_2, \forall y_1, y_2. (x_1 \land y_1) \allowbreak \lor (x_2 \land \neg y_1 \land y_2 ) \lor (\neg x_1 \land \neg x_2 \land y_2)$.}
    \label{fig:ex15}
\end{figure}

\section{Conclusion}
	We addressed the computational complexity of two variants of the secret protection problem in discrete-event systems---Parikh-SPP and Indicator-SPP. While previous results showed that Parikh-SPP is solvable in polynomial time under the assumption of uniquely labeled transitions, we extended the understanding of its hardness by considering more general and practically relevant scenarios. 

	We showed that relaxing the uniqueness constraint on event labels makes Parikh-SPP \NP-hard. Moreover, we proved that the problem remains \NP-hard even if the cost and clearance functions are constant, and the security requirement is binary with only a single secret state. These results closed the gap left open in the literature and established \NP-hardness for the discussed variants of the problem. 

	We further showed that no sub-exponential-time algorithm can solve Parikh-SPP or Indicator-SPP unless the exponential time hypothesis fails. This result implies that, under current complexity-theoretic assumptions, exhaustive enumeration of all subsets of protectable events is essentially optimal.

	Given that the decision version of Parikh-SPP is \NP-com\-plete, we developed an ILP-based algorithm and conducted an extensive empirical evaluation demonstrating its scalability and efficiency on benchmark instances comprising up to hundreds of thousands of states.

	Finally, we analyzed the computational complexity of the decision version of Indicator-SPP, where only distinct protected events contribute to clearance. We proved that this problem is $\Sigma_{2}^{P}$-complete, placing it at the second level of the polynomial-time hierarchy. In contrast to Parikh-SPP, for which an ILP-based approach proved practically efficient, developing a practically efficient algorithm for Indicator-SPP remains an open challenge. Since the worst-case complexity need not reflect performance on realistic inputs, it remains open whether suitable techniques or heuristics could solve relevant instances of Indicator-SPP efficiently.

\begin{credits}
\subsubsection{\ackname}
	This research was supported by the Palacky University under the grants IGA PrF 2025 018 and IGA PrF 2026 026.
\end{credits}
\bibliographystyle{splncs04}
\bibliography{mybib}
\end{document}